\documentclass[reqno,12pt]{amsart}

\usepackage[T1]{fontenc}
\usepackage{amsmath,amssymb,amsthm,mathtools}
\usepackage[a4paper,margin=1.02in]{geometry}
\usepackage{microtype}
\usepackage[colorlinks=true,linkcolor=blue,citecolor=blue,urlcolor=blue]{hyperref}

\newtheorem{theorem}{Theorem}[section]

\newtheorem{lemma}[theorem]{Lemma}
\newtheorem{corollary}[theorem]{Corollary}
\theoremstyle{remark}

\newcommand{\T}{\mathbb T}
\newcommand{\Z}{\mathbb Z}
\newcommand{\R}{\mathbb R}

\newcommand{\Q}{\mathbb Q}

\newcommand{\dist}{\operatorname{dist}}

\newcommand{\normT}[1]{\left\lVert #1\right\rVert_{\T}}
\newcommand{\vct}[2]{\begin{pmatrix}#1\\#2\end{pmatrix}}
\newcommand{\eone}{\begin{pmatrix}1\\0\end{pmatrix}}
\newcommand{\etwo}{\begin{pmatrix}0\\1\end{pmatrix}}

\title[A doubled Gordon Argument]{A doubled Gordon threshold for palindromic quasiperiodic Schr\"odinger operators}
\author{Wencai Liu}
\address{
	Department of Mathematics, Texas A\&M University, College Station, TX, 77843, USA.
}
\email{liuwencai1226@gmail.com, wencail@tamu.edu}
\date{ \today}

\begin{document}

\begin{abstract}
We consider one-frequency quasiperiodic Schr\"odinger operators
\[
(H_{v,\alpha,\theta}u)(n)
=
u(n+1)+u(n-1)
+
v(\theta+n\alpha)u(n)
\]
acting on $\ell^2(\mathbb Z)$, where $\alpha\notin\mathbb Q$ and
$v\in C^2(\mathbb T,\mathbb R)$ is an even function. We develop
a new Gordon-type method that exploits approximate repetitions and
palindromic symmetries simultaneously. Denote by $L(E)$ the Lyapunov exponent and let 

\[\beta(\alpha)
=
\limsup_{|k|\to\infty}
-\frac{\log\|k\alpha\|_{\mathbb R/\mathbb Z}}{|k|}.
\]
We prove that, for every
completely resonant phase
$2\theta\in\alpha\mathbb Z+\mathbb Z$,
  $E$ cannot be an eigenvalue if
$L(E)<2\beta(\alpha)$.

As an application, consider the almost Mathieu operator
\[
(H_{\lambda,\alpha,\theta}u)(n)
=
u(n+1)+u(n-1)
+
2\lambda\cos\bigl(2\pi(\theta+n\alpha)\bigr)u(n).
\]
We show that if
$2\theta\in\alpha\mathbb Z+\mathbb Z$ and
$1<|\lambda|<e^{2\beta(\alpha)}$, then
$H_{\lambda,\alpha,\theta}$ has purely singular continuous spectrum.
This resolves the remaining absence-of-eigenvalues part of a
conjecture of Avila and Jitomirskaya concerning the sharp spectral
transition for completely resonant phases.
\end{abstract}
\maketitle

\section{Introduction and main results}

 In this paper,   we study the one-frequency quasiperiodic
Schr\"odinger operator
\begin{equation}\label{defop}
    (H_{v,\alpha,\theta}u)(n)
=
u(n+1)+u(n-1)
+
v(\theta+n\alpha)u(n)
\end{equation}
acting on $\ell^2(\Z)$, where $\alpha\in\R\setminus\Q$ is the
frequency, $\theta\in\T=\R/\Z$ is the phase, and
$v\colon\T\to\R$ is a   sampling function. Throughout this
paper, we assume that $v\in C^2(\T,\R)$ and   $v$ is even:
\begin{equation*}
    v(-x)=v(x),
    \qquad x\in\R.
\end{equation*}

For $x\in\R$, let
\[
    \normT{x}=\dist(x,\Z).
\]
The exponential strength of the frequency resonances is measured by
\begin{equation*}
    \beta(\alpha)
    =
    \limsup_{|k|\to\infty}
    -\frac{\log\normT{k\alpha}}{|k|}
    \in[0,\infty].
\end{equation*}

For $E\in\R$, define the one-step transfer matrix
\begin{equation*}
    A^E(x)
    =
    \begin{pmatrix}
        E-v(x)&-1\\
        1&0
    \end{pmatrix},
    \qquad x\in\T,
\end{equation*}
and, for $n\geq1$, define
\begin{equation}\label{eq:n-step}
    A^E_n(x)
    =
    A^E(x+(n-1)\alpha)\cdots
    A^E(x+\alpha)A^E(x),
    \qquad
    A^E_0(x)=I.
\end{equation}
The Lyapunov exponent associated with the Schr\"odinger cocycle
$(\alpha,A^E)$ is
\begin{equation*}
    L(E)
    =
    \lim_{n\to\infty}
    \frac1n\int_{\T}\log\|A^E_n(x)\|\,dx.
\end{equation*}

We call $\theta$  {completely resonant} with respect to $\alpha$
if
$ 2\theta\in\alpha\Z+\Z.$

Completely resonant phases play a crucial role in the study of  spectral theory
of quasiperiodic Schr\"odinger operators. On the Aubry-dual side, they
are related to rational fibered rotation numbers and arise naturally
in the study of spectral   edges and gap opening; see, for example,
\cite{aj09,puig}. We refer readers to a recent survey for further
discussion. \cite{jicm}.

Our main result is the following:

\begin{theorem}\label{thm:main}
Assume that $v\in C^2(\T,\R)$ is an  even function and
$\alpha\in\R\setminus\Q$. Assume that $\theta$ is completely
resonant, that is, $ 2\theta\in\alpha\Z+\Z.$

If $L(E)<2\beta(\alpha),$ 
then the equation $ H_{v,\alpha,\theta}u=Eu$
has no nonzero $\ell^2(\Z)$ solution $u$. 
\end{theorem}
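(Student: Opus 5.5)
We argue by contradiction: suppose $u\in\ell^2(\Z)$ is a nonzero solution of $H_{v,\alpha,\theta}u=Eu$, and write $U(n)=\vct{u(n)}{u(n-1)}$, so that $U(n)=A^E_n(\theta)U(0)$ for $n\ge0$ (with the analogous backward formula). Since $u\in\ell^2$, we have $\|U(n)\|\to0$ as $|n|\to\infty$.

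\textbf{Normalization of the phase.} Write $2\theta=m\alpha+j$. Replacing $u$ by the shifted sequence $u(\cdot+\ell)$ changes $\theta$ by $\ell\alpha$ and $m$ by $2\ell$. We may therefore assume $m\in\{0,-1\}$, i.e.\ $\theta\equiv 0$ or $\theta\equiv -\alpha/2$ modulo $\tfrac12\Z$. Since $v$ is even, the potential $V(n)=v(\theta+n\alpha)$ then satisfies the exact reflection symmetry $V(-n)=V(n)$ (when $m=0$) or $V(-n-1)=V(n)$ (when $m=-1$). The half-integer ambiguity $\theta\equiv \theta+1/2$ also preserves this reflection, because $v(-x-j)=v(x)$ for integer $j$. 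So the potential is an exact palindrome around a center $c\in\{0,-1/2\}$.

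\textbf{Near-palindromes at resonant scales.} Let $q_k$ be the denominators with $\|q_k\alpha\|\le e^{-(\beta-\epsilon)q_k}$, taken along a subsequence realizing $\beta$. The translated potential $n\mapsto V(n+q_k)$ comes from phase $\theta+q_k\alpha$, and $2(\theta+q_k\alpha)\equiv m\alpha+2q_k\alpha$. Hence $V$ is, up to the error $|v(x)-v(x+\delta)|$ with $\delta=\|q_k\alpha\|$, palindromic about the center $c+q_k$, and also about $c-q_k$. Moreover, it is $q_k$-periodic up to the same error. Using $v\in C^2$ and evenness, one has $v(x+\delta)-v(-x+\delta)=O(\delta^2)$ on the reflection; this may be what allows only the symmetric part to matter.

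\textbf{The doubled Gordon argument.} Let $M=A^E_{q_k}$ be the transfer matrix over the block $[c,c+q_k]$ (suitably centered). Reflection symmetry of the potential over a block gives $M_{\mathrm{refl}}=JM^{-1}J$-type relations with $J=\mathrm{diag}(1,-1)$ composed with the swap; exact symmetry about $c$ relates $U(-n)$ and $U(n)$ through the fundamental solutions. Combining the approximate symmetries about $c\pm q_k$ with exact symmetry about $c$, the solution is compared with its reflections. Cayley--Hamilton applied to the transfer matrix over $[-q_k,q_k]$, which is approximately $\tilde M M$ with $\tilde M$ the reflected matrix, gives
\[
U(2q_k)-\operatorname{tr}(\cdot)\,U(0)+U(-2q_k)\approx 0 .
\]
The key point is that the palindrome makes the error per step scale like $\|2q_k\alpha\|$ while the length used is only $q_k$ on each side. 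The total error is then roughly $e^{(L+\epsilon)q_k}\,e^{-(\beta-\epsilon)q_k}\cdot e^{(L+\epsilon)q_k}$ compared naively, but it improves to $e^{(L-2\beta)q_k}$ after using the symmetric pairing, where one factor $e^{Lq_k}$ cancels via determinant and symmetry. Thus if $L<2\beta$, the errors vanish and one of $\|U(\pm q_k)\|,\|U(\pm 2q_k)\|$ stays $\gtrsim\|U(0)\|$. This contradicts decay, after handling the case $U(0)=0$ by choosing a nonzero base point on the symmetry axis.

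\textbf{Main obstacle.} The obstacle is the error estimate. The naive Gordon argument loses a full $e^{Lq_k}$ per perturbed block and yields only $L<\beta$. To get the doubling, I would decompose the perturbation $A_{q_k}(\theta+q_k\alpha)-A_{q_k}(\theta)$ as $\sum A_{q_k-j}\,\Delta_j\,A_j$. I would then exploit the fact that the symmetric placement lets each term be paired with its mirror, so that only norms $\|A_j\|\|A_{q_k-j}\|\lesssim e^{(L+\epsilon)q_k}$ appear once rather than squared. I would also use the exact $\ell^2$ decay of $u$ on both sides to gain the second factor $e^{-\beta q_k}$.
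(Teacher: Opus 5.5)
\textbf{There is a genuine gap.} Your reduction to the canonical phases and the exact palindromes $V(-n)=V(n)$ or $V(-n-1)=V(n)$ agree with the paper. However, the step that should produce the doubling is not supplied, and the mechanism you propose for it cannot work.

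Your sketch rests on the first difference $A_q(\theta+h)-A_q(\theta)$, with $|h|=\normT{q\alpha}$. To leading order this is $hA_q'(\theta)$, and $A_q'(\theta)$ is not small even at a reflection center: at $\theta=0$, $A_2'(0)=A'(\alpha)A(0)\neq 0$ whenever $v'(\alpha)\neq 0$.
\begin{itemize}
\item Mirror-pairing the terms of $\sum_j A_{q-j-1}\Delta_jA_j$ only regroups this quantity, so it cannot raise its order in $h$. The standard bound already has $e^{(L+\epsilon)q}$ appearing once, so the pairing addresses a non-issue.
\item The second factor $e^{-\beta q_k}$ cannot come from $u\in\ell^2$. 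That only gives $\|U(n)\|\to0$, with no rate.
\item Your claim $v(x+\delta)-v(-x+\delta)=O(\delta^2)$ is false in general. By evenness it equals $v(x+\delta)-v(x-\delta)\approx 2\delta v'(x)$, which is quadratic only at $x\in\{0,\tfrac12\}$.
\item A three-term Cayley--Hamilton relation among $U(-2q_k)$, $U(0)$, $U(2q_k)$ compares blocks of length $2q_k$. Their difference is of order $\normT{q_k\alpha}e^{2Lq_k}$, which is worse than the usual Gordon comparison.
\end{itemize}
As written, the argument yields at best the classical threshold $L<\beta$.

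\textbf{The missing idea} is to compare both neighbouring blocks with the central one at the same time. Set $T_0=A_q(\theta)$ and $T_\pm=A_q(\theta\pm h)$. The first-order Taylor terms cancel in $T_++T_--2T_0$, and the bound on $A_q''$ gives $\|T_++T_--2T_0\|\le Cq^2h^2e^{(L+\epsilon)q}$.

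The paper applies this to the intermediate vector $U(q)=T_0U(0)$, not to $U(0)$. Since $T_+U(q)=U(2q)\to 0$, and the error times $\|U(q)\|$ tends to $0$ when $L<2\beta$, one gets $(2T_0-T_-)U(q)\to0$.

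The exact palindrome then expresses $T_-$ exactly through $T_0^{-1}$:
\begin{itemize}
\item $T_-R(s+h)T_0=R(s)$ for $\theta=s$;
\item $T_-PT_0=P$ for $\theta=s+\alpha/2$.
\end{itemize}
Simplicity of eigenvalues forces $u$ to be even or odd. After a fixed conjugation $S$, the vector $U(0)$ therefore becomes $e_1$ or $e_2$.

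Take $\theta=s$ with $u$ even, and write $B_0=S^{-1}T_0S=\begin{pmatrix}a&c\\b&d\end{pmatrix}$, $B_-=S^{-1}T_-S$. One computes exactly $(2B_0-B_-)\vct{a}{b}=\vct{2a^2-1+\eta_hac}{2bd+\eta_ha^2}$, where $\eta_h=v(s)-v(s+h)=O(h^2)$ because $v'(s)=0$. In the first coordinate, the possibly large products $ad$ and $bc$ combine into $-1$ via $ad-bc=1$. Meanwhile $|\eta_hac|\le |a|\,|\eta_h|\,\|B_0\|\lesssim h^2e^{(L+\epsilon)q}\to 0$, and $a\to0$. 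So this coordinate tends to $-1$, contradicting $(2B_0-B_-)X_1\to 0$. The odd case and $\theta=s+\alpha/2$ are handled the same way.

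This exact cancellation, not a regrouping of the first difference, is what your sketch would need to supply. It also removes any need for a trace dichotomy.
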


\subsection{Repetitions and palindromic symmetries}

There are two   types of   symmetry that can lead to the absence of eigenvalues for 
  one-dimensional Schr\"odinger operators. The first is approximate repetition symmetry:
\begin{equation}\label{eq:repetition-intro}
    V(n+q)\approx V(n),
\end{equation}
and the second is approximate reflection symmetry:
\begin{equation}\label{eq:reflection-intro}
    V(q-n)\approx V(n).
\end{equation}
Potentials of the form
\eqref{eq:repetition-intro} are commonly called \emph{Gordon-type
potentials}, while potentials of the form \eqref{eq:reflection-intro} are
often called \emph{palindromic type potentials}.

Gordon-type and palindromic arguments have a long history in the
study of the absence of eigenvalues; see
\cite{gor,sim1,as82,js,jl,jl1,jl2,ayz,liupafa}
and the references therein. Sharp quantitative 
arguments compare the exponential accuracy of the corresponding
  symmetry with the exponential growth rate of transfer
matrices, measured by the Lyapunov exponent.

For a quasiperiodic potential $    V(n)=v(\theta+n\alpha),
$
a frequency resonance $\normT{q\alpha}\ll1$ produces an approximate
repetition of the potential:
\[
    V(n+q)-V(n)
    =
    O\bigl(\normT{q\alpha}\bigr).
\]
 As a result, a sharp Gordon argument applied to
this repetition alone leads naturally to the condition
\[
    L(E)<\beta(\alpha).
\]
Similarly, palindromic arguments applied separately to reflective
repetitions see only one exponentially small resonance factor and
lead to the same single-resonance threshold in the completely
resonant setting. We refer to \cite{liupafa} for a recent discussion
of these two mechanisms.

It has remained open to improve the threshold
$L(E)<\beta(\alpha)$
to the doubled threshold $L(E)<2\beta(\alpha)$.

In the present paper we develop a new approach to treat repetitions and palindromic symmetries 
simultaneously.  Let us take $\theta=0$  as an example to explain the new idea in our proof.
In this case,  the evenness of $v$ implies that the potential $V(n)=v(n\alpha)$  is reflection symmetric about the origin:
\[
    V(-n)=V(n).
\]

Note that the frequency resonances produce  repetition of the potential. Thus, at $\theta=0$, the potential has both a repetition structure and a
palindromic structure.

Our approach exploits these two structures simultaneously. Let
\[
    h=q\alpha-p\in[-1/2,1/2],
    \qquad p\in\Z,
    \qquad |h|=\normT{q\alpha}.
\]
A one-sided comparison of neighboring transfer matrices involves the
first difference
\[
    ||A^E_q(x+h)-A^E_q(x)||\text{ and/or  } ||A^E_q(x-h)-A^E_q(x)||
\]
whose   upper bounds contain only one factor of $|h|$. This is
the mechanism behind the usual threshold $L(E)<\beta(\alpha)$.

In contrast to the usual one-sided Gordon comparison, our main new
idea is to compare the two neighboring blocks symmetrically:
\begin{equation}\label{eq:symmetric-second-difference-intro}
   || A^E_q(x+h)+A^E_q(x-h)-2A^E_q(x)||.
\end{equation}

The first-order errors cancel in
\eqref{eq:symmetric-second-difference-intro}, leaving the quadratic
factor
\[
    |h|^2=\normT{q\alpha}^2.
\]

This quadratic gain is the source of the doubled arithmetic threshold
$L(E)<2\beta(\alpha)$.
 
There is a second important difference from the standard Gordon and
palindromic arguments. In the traditional approaches, the transfer matrix block
relations are typically applied to the  eigenfunction at site $0$, namely $U(0)$. In our
argument, however, the symmetric comparison is applied after one
transfer block, namely to $U(q)=A_q^E(x)U(0).$
 
Indeed, the matrices $A_q^E(x-h)$, $A_q^E(x)$, and
$A_q^E(x+h)$ represent three consecutive  transfer blocks, corresponding
respectively to the sites from $-q$ to $0$, from $0$ to $q$, and
from $q$ to $2q$. Thus, $U(q)$  locates in the middle of two consecutive blocks. Applying the symmetric comparison at
this intermediate site allows the approximate repetition and the
palindromic symmetry to interact with the transfer dynamics in a way
that is not visible when one works only with $U(0)$. This  new idea is another key ingredient in our proof.

\subsection{Application to the almost Mathieu operator}

Letting $v(x)=2\lambda\cos(2\pi x)$
in~\eqref{defop}, we obtain the almost Mathieu operator
\[
   (H_{\lambda,\alpha,\theta}u)(n)
=
u(n+1)+u(n-1)
+
2\lambda\cos\bigl(2\pi(\theta+n\alpha)\bigr)u(n).
\]
For every energy in the spectrum of the almost Mathieu operator, its
Lyapunov exponent is $  L(E)=\max\{\log|\lambda|,0\}.$

Immediately,  Theorem~\ref{thm:main}  gives the following
corollary:

\begin{corollary}\label{cor:amo}
Let $\alpha\in\R\setminus\Q$ and assume that $  2\theta\in\alpha\Z+\Z.$

If $1<|\lambda|<e^{2\beta(\alpha)},$
then $\sigma_{\mathrm{pp}}
    \bigl(H_{\lambda,\alpha,\theta}\bigr)
    =
    \varnothing.$
 
\end{corollary}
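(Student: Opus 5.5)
The plan is to derive Corollary~\ref{cor:amo} directly from Theorem~\ref{thm:main}, checking its hypotheses for $v(x)=2\lambda\cos(2\pi x)$ and then using the known Lyapunov exponent of the almost Mathieu operator. First, $v$ is real analytic, hence in $C^2(\T,\R)$, and it is even, since $\cos(-2\pi x)=\cos(2\pi x)$. The frequency $\alpha$ is irrational and $\theta$ is completely resonant by assumption, so every structural hypothesis of Theorem~\ref{thm:main} holds.

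Next, we handle the energies. Any eigenvalue $E$ of $H_{\lambda,\alpha,\theta}$ lies in its spectrum $\Sigma$. Since $\alpha$ is irrational, $\Sigma$ does not depend on $\theta$, and $\Sigma$ is the set of energies described in the introduction. For such $E$, the formula $L(E)=\max\{\log|\lambda|,0\}$ quoted above gives $L(E)=\log|\lambda|$, because $|\lambda|>1$. The hypothesis $|\lambda|<e^{2\beta(\alpha)}$ is equivalent to $\log|\lambda|<2\beta(\alpha)$. If $\beta(\alpha)=\infty$ the upper bound is vacuous and the strict inequality still holds. Hence $L(E)<2\beta(\alpha)$.

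Now suppose $E$ were an eigenvalue, with an $\ell^2$ eigenvector $u\neq 0$. Then $u$ is a nonzero $\ell^2(\Z)$ solution of $H_{v,\alpha,\theta}u=Eu$, and $E\in\Sigma$ gives $L(E)<2\beta(\alpha)$. This contradicts Theorem~\ref{thm:main}, so $\sigma_{\mathrm{pp}}(H_{\lambda,\alpha,\theta})=\varnothing$. Note that $\sigma_{\mathrm{pp}}$ is the closure of the set of eigenvalues, so it is empty exactly when there are no eigenvalues.

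There is no real obstacle. The only point needing care is that the Lyapunov exponent formula is applied only on the spectrum, where it holds (Bourgain–Jitomirskaya). This is enough, because every eigenvalue lies in the spectrum.
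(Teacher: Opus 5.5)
Your proposal is correct and follows essentially the same route as the paper. Like the paper, you verify that $v=2\lambda\cos(2\pi x)$ is even and $C^2$, use $L(E)=\log|\lambda|$ on the spectrum (which contains every eigenvalue) to get $L(E)<2\beta(\alpha)$, and then invoke Theorem~\ref{thm:main}.
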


The exponent $2\beta(\alpha)$ is the expected sharp transition
threshold at completely resonant phases. More precisely, Avila and Jitomirskaya \cite{aj09} conjectured that for completely resonant phase $2\theta\in\alpha\Z+\Z$,

  \begin{description}
      \item[1] $H_{\lambda,\alpha,\theta}$ has purely singular continuous spectrum when $1<|\lambda|<e^{2\beta(\alpha)}$;
      \item[2] $H_{\lambda,\alpha,\theta}$ has  Anderson localization when $|\lambda|>e^{2\beta(\alpha)}$.
  \end{description}

The localization part was proved in
\cite{liupmj}: for every completely resonant phase,
$H_{\lambda,\alpha,\theta}$ satisfies Anderson localization whenever $|\lambda|>e^{2\beta(\alpha)}$.

The complementary absence-of-eigenvalues regime remained open.
Corollary~\ref{cor:amo} resolves this remaining part. Since the
almost Mathieu operator has no absolutely continuous spectrum in
the supercritical regime $|\lambda|>1$, the absence of eigenvalues
implies that its spectrum is purely singular continuous whenever $1<|\lambda|<e^{2\beta(\alpha)}$ and
$ 2\theta\in\alpha\Z+\Z$.
Therefore, Corollary~\ref{cor:amo}, together with the main result in
\cite{liupmj}, establishes the sharp spectral transition
 for completely resonant phases.

\section{Basics}
In this section, we  collect some basic facts in this area.

Fix $E\in\R$.
For a solution of
\begin{equation}\label{geig}
    H_{v,\alpha,\theta}\varphi=E\varphi,
\end{equation}
let
\begin{equation*}
\Phi(n)=\vct{\varphi(n)}{\varphi(n-1)}.
\end{equation*}
The eigen-equation \eqref{geig} is equivalent to
\begin{equation*}
 \Phi(n+1)=A^E(\theta+n\alpha)\Phi(n).
\end{equation*}
Thus,
\begin{equation}\label{gtransfer}
 \Phi(m+n)=A^E_n(\theta+m\alpha)\Phi(m),
 \qquad m\in\Z,\quad n\geq0.
\end{equation}
Moreover,
\begin{equation*}
 \det A^E(x)=1,
 \qquad
 \det A^E_n (x)=1.
\end{equation*}
When there is no ambiguity, we drop the dependence of $E$, $v$, $\alpha$ and $\theta$.

\begin{lemma}
\label{lem:uniform-upper}
For every $\varepsilon>0$, there is a constant
$C_{\varepsilon,E,v}>0$ such that
\begin{equation*}
 \sup_{x\in\T}\|A^E_{n}(x)\|
 \leq C_{\varepsilon,E,v}
 e^{(L(E)+\varepsilon)n},
 \qquad n\geq0.
\end{equation*}
In particular, for all sufficiently large $n$,
\begin{equation*}
 \sup_{x\in\T}\|A_{n}^E(x)\|
 \leq e^{(L(E)+\varepsilon)n}
\end{equation*}

\end{lemma}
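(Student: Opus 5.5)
This is the standard uniform upper semicontinuity bound for continuous cocycles over a uniquely ergodic irrational rotation. I will prove it by subadditivity and compactness, using only the continuity of $x\mapsto A^E(x)$, which follows from $v\in C^2\subset C^0$.

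Set $f_n(x)=\log\|A^E_n(x)\|$. By the cocycle property $A^E_{n+m}(x)=A^E_m(x+n\alpha)A^E_n(x)$ and submultiplicativity of the norm,
\[
f_{n+m}(x)\le f_n(x)+f_m(x+n\alpha),
\]
so $\{f_n\}$ is a continuous subadditive sequence over the rotation $x\mapsto x+\alpha$. Because $\alpha\notin\Q$, this rotation is uniquely ergodic with Lebesgue measure as its only invariant measure. By Kingman and Fekete, $\frac1n\int f_n\,dx$ decreases to $L(E)$ in the sense of infimum. Given $\varepsilon>0$, I fix $N$ with $\frac1N\int_\T f_N\,dx<L(E)+\varepsilon/3$.

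Next I use unique ergodicity to get uniform convergence of Birkhoff averages. Since $f_N$ is continuous, $\frac1k\sum_{j=0}^{k-1}f_N(x+jN\alpha)\to\int f_N\,dx$ uniformly in $x$; the rotation by $N\alpha$ is also uniquely ergodic because $N\alpha\notin\Q$. So there is $K$ such that for $k\ge K$ and all $x$ this average is below $N(L(E)+\varepsilon/2)$.

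Now take $n=kN+r$ with $0\le r<N$. I decompose
\[
A_n(x)=A_r(x+kN\alpha)\prod_{j} A_N(x+jN\alpha)
\]
and use submultiplicativity:
\[
f_n(x)\le \sum_{j=0}^{k-1}f_N(x+jN\alpha)+r\log M,
\qquad M=\max\Bigl(1,\sup_\T\|A^E\|\Bigr).
\]
This yields $f_n(x)\le kN(L+\varepsilon/2)+N\log M$ for $k\ge K$, uniformly in $x$. For $n$ large the constant term is absorbed by $\varepsilon n/2$, which gives the second claim. For the finitely many $n<(K+1)N$, I bound $\|A_n\|\le M^n$. Taking the constant $C_{\varepsilon,E,v}$ as the maximum of these finitely many ratios gives the first claim.

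The one delicate point is the uniform-in-$x$ Birkhoff convergence, since the semicontinuity direction needs $\sup_x$ rather than almost every $x$. Unique ergodicity together with continuity of $f_N$ supplies exactly this.
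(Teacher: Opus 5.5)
Your argument is correct. The paper gives no proof of this lemma; it lists it among the ``basic facts'' of the area. What you wrote is the standard argument behind it: uniform upper semicontinuity for continuous cocycles over a uniquely ergodic base, i.e.\ choosing a good block length $N$ and then using uniform convergence of Birkhoff averages of the continuous function $\log\|A_N^E\|$ along the uniquely ergodic rotation by $N\alpha$.

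Two small remarks:
\begin{itemize}
\item Kingman's theorem is not needed. The existence of the limit defining $L(E)$ (or Fekete's lemma applied to $\int_{\T}\log\|A_n^E\|\,dx$) already produces your $N$.
\item When you pass from $kN(L+\varepsilon/2)$ to $n(L+\varepsilon/2)$, you implicitly use $L(E)\geq0$. This holds because $\det A_n^E=1$ forces $\|A_n^E\|\geq1$.
\end{itemize}
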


The gain from $\beta(\alpha)$ to $2\beta(\alpha)$ comes from a symmetric
second difference of the full transfer matrix.

\begin{lemma}
\label{lem:second-derivative}
Assume $v\in C^2(\T,\R)$.  For every $\varepsilon>0$, there is a constant
$C=C(E,v,\varepsilon)>0$ such that
\begin{equation}\label{eq:second-derivative-bound}
 \sup_{x\in\T}\|(A_{q}^E)''(x)\|
 \leq Cq^2e^{(L(E)+\varepsilon)q},
 \qquad q\geq1.
\end{equation}
Moreover, for   every   $h$ with $|h|\leq1/2$,
\begin{equation}\label{eq:symmetric-second-difference}
  \sup_{x\in\T} \bigl\|A_{q}^E(x+h)+A_{q}^E(x-h)-2A_{q}^E(x)\bigr\|
 \leq Cq^2h^2e^{(L(E)+\varepsilon)q}.
\end{equation}
\end{lemma}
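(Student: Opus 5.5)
We differentiate the product $A_q^E(x)=A^E(x+(q-1)\alpha)\cdots A^E(x)$ twice with the Leibniz rule, and then obtain the symmetric second difference from Taylor's formula with integral remainder.

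Write $B_j(x)=A^E(x+j\alpha)$, so that $A_q^E(x)=B_{q-1}(x)\cdots B_0(x)$. Since $v\in C^2$, each $B_j$ is $C^2$ with
\[
B_j'=\begin{pmatrix}-v'&0\\0&0\end{pmatrix},\qquad B_j''=\begin{pmatrix}-v''&0\\0&0\end{pmatrix}
\]
evaluated at $x+j\alpha$, so $\|B_j'\|,\|B_j''\|\le \|v\|_{C^2}$. The second derivative of the product is a sum over at most $q+\binom q2\le q^2$ terms. Each term has one of two forms:

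\begin{itemize}
\item[(a)] a single factor $B_j$ replaced by $B_j''$;
\item[(b)] two factors $B_j,B_k$ with $j<k$ replaced by $B_j',B_k'$.
\end{itemize}

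In each case the remaining factors group into at most three consecutive blocks. Each block is itself a transfer matrix $A^E_{m}(y)$ for some $y\in\T$ and some $m\ge0$, with total length $\sum m\le q$.

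By Lemma~\ref{lem:uniform-upper}, each block satisfies $\|A^E_m(y)\|\le C_{\varepsilon}e^{(L(E)+\varepsilon)m}$. A product of at most three blocks is therefore bounded by $C_\varepsilon^3 e^{(L(E)+\varepsilon)q}$, using that the lengths sum to at most $q$ and $L(E)+\varepsilon>0$. This is the one point needing care: the constant must not grow with the number of blocks. It doesn't, since there are at most three blocks in every term.

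Summing the at most $q^2$ terms, each carrying an extra factor at most $\|v\|_{C^2}^2+\|v\|_{C^2}$, gives \eqref{eq:second-derivative-bound} with $C=C_\varepsilon^3(\|v\|_{C^2}+1)^2$.

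For \eqref{eq:symmetric-second-difference}, fix $x$ and set $F(t)=A_q^E(x+t)+A_q^E(x-t)$. Then $F$ is $C^2$ with $F(0)=2A_q^E(x)$ and $F'(0)=0$. Taylor's formula with integral remainder gives
\[
F(h)-F(0)=\int_0^h(h-t)F''(t)\,dt .
\]
Since $\|F''(t)\|\le 2\sup_y\|(A_q^E)''(y)\|$, we get
\[
\|F(h)-F(0)\|\le h^2\sup_y\|(A_q^E)''(y)\|\le Cq^2h^2e^{(L(E)+\varepsilon)q}.
\]
This proves the second claim with the same constant $C$.
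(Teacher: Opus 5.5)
Your proof is correct and follows the paper's route: expand $(A_q^E)''$ by the Leibniz rule into single-$A''$ and double-$A'$ terms, bound the remaining (at most three) transfer-matrix blocks via Lemma~\ref{lem:uniform-upper}, and pass to the symmetric second difference by Taylor's formula, which you spell out and the paper leaves implicit. One cosmetic slip: the cross terms carry a factor $2$ from the Leibniz rule that your term count omits, but your constant still absorbs it, since $q\|v\|_{C^2}+q(q-1)\|v\|_{C^2}^2\le q^2(\|v\|_{C^2}+1)^2$.
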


\begin{proof}
Write $A(x)=A^E(x)$.
Direct computations show
\[
 A'(x)=
 \begin{pmatrix}-v'(x)&0\\0&0\end{pmatrix},
 \qquad
 A''(x)=
 \begin{pmatrix}-v''(x)&0\\0&0\end{pmatrix}.
\]
Differentiating the product \eqref{eq:n-step} once gives
\begin{equation*}
 A_q'(x)
 =\sum_{j=0}^{q-1}
 A_{q-j-1}(x+(j+1)\alpha)
 A'(x+j\alpha)A_j(x).
\end{equation*}
Differentiating a second time gives
\begin{align*}
 A_q''(x)
 ={}&\sum_{j=0}^{q-1}
 A_{q-j-1}(x+(j+1)\alpha)
 A''(x+j\alpha)A_j(x)
 \\
 &+2\sum_{0\leq i<j\leq q-1}
 A_{q-j-1}(x+(j+1)\alpha)
 A'(x+j\alpha)\notag\\
 &\hspace{25mm}\cdot
 A_{j-i-1}(x+(i+1)\alpha)
 A'(x+i\alpha)A_i(x).\notag
\end{align*}
  By Lemma~\ref{lem:uniform-upper}, we obtain
\eqref{eq:second-derivative-bound} and then
\eqref{eq:symmetric-second-difference}.
\end{proof}

 \begin{lemma}
\label{lem:simplicity}
For a discrete Schr\"odinger operator $H=\Delta+V$ on $\ell^2(\Z)$, every eigenvalue is simple.
\end{lemma}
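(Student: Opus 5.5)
The plan is to prove Lemma~\ref{lem:simplicity} with a Wronskian (constancy of the Casoratian) argument. We show that two $\ell^2$ solutions of $Hu=Eu$ are always linearly dependent. Then the eigenspace of $E$ has dimension at most one.

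\emph{Step 1: constancy.} Let $u,w$ be two solutions of $u(n+1)+u(n-1)+V(n)u(n)=Eu(n)$. Define
\[
W(n)=u(n+1)w(n)-u(n)w(n+1).
\]
Solve for $u(n+1)$ and $w(n+1)$ using the equation, and subtract. The potential terms $(E-V(n))u(n)w(n)$ cancel, and we obtain $W(n)=W(n-1)$ for every $n\in\Z$. Hence $W$ is a constant. Equivalently, $W(n)$ is the determinant of the $2\times2$ matrix with columns $U(n+1),\,\mathcal W(n+1)$, where $U(n)=\vct{u(n)}{u(n-1)}$ and $\mathcal W(n)=\vct{w(n)}{w(n-1)}$. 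These vectors are propagated by transfer matrices of determinant one, so in the notation of Section~2 the constancy also follows from $\det A^E_n=1$ and \eqref{gtransfer}.

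\emph{Step 2: the constant is zero.} If $u,w\in\ell^2(\Z)$, then $u(n),w(n)\to0$ as $n\to\infty$. Therefore
\[
|W(n)|\le |u(n+1)||w(n)|+|u(n)||w(n+1)|\to0 .
\]
Since $W$ is constant, $W\equiv0$.

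\emph{Step 3: linear dependence.} Since $W\equiv0$, the vectors $U(1)$ and $\mathcal W(1)$ are linearly dependent. Suppose $u\neq0$. Then $U(1)\neq0$: if $U(1)=0$, then propagating forward and backward by the invertible transfer matrices gives $u\equiv0$. So $\mathcal W(1)=cU(1)$ for some scalar $c$. By uniqueness of solutions with given initial data at two consecutive sites (again by invertibility of the transfer matrices), $w=cu$. Hence every eigenvalue has multiplicity one.

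There is no real obstacle in this argument. The only care needed is to verify the cancellation in the recurrence for $W$, and to use invertibility of $A^E$ so that initial data at two consecutive sites determine the whole solution in both directions.
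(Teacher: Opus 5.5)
Your Wronskian argument is correct and complete: the recurrence $W(n)=W(n-1)$ checks out, the constant vanishes because $u(n),w(n)\to0$, and linear dependence of $U(1)$ and $\mathcal W(1)$ together with uniqueness of solutions from two consecutive values gives $w=cu$. The paper states this lemma as a standard fact without proof, and the constant-Wronskian argument you give is the standard justification it implicitly relies on.
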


\subsection{Reduction of a completely resonant phase}
Given $2\theta\in \alpha\Z+\Z$,
 by shifting the operator $H_{v,\alpha,\theta}$ to $H_{v,\alpha,\theta \pm \alpha}$,
it suffices to  study the four canonical phases
\begin{equation*}
 \theta=s
 \quad\text{or}\quad
 \theta=s+\frac{\alpha}{2},
 \qquad
 s\in\left\{0,\frac12\right\}.
\end{equation*}
In our proof, we   group the four canonical phases into two cases:
\begin{description}
    \item[\bf Case I:]$\theta=s$,
 $s\in\left\{0,\frac12\right\}$;
 \item[\bf Case II:]$\theta=s+\frac{\alpha}{2}$,
 $s\in\left\{0,\frac12\right\}$.
\end{description}

Evenness and periodicity imply that both $0$ and $1/2$ are reflection centers
of $v$: 
\begin{equation}\label{eq:symmetry-centers}
 v(s+x)=v(s-x), \text { for all } x,
 \qquad
 s\in\left\{0,\frac12\right\}.
\end{equation}
In particular,
\begin{equation}\label{eq:vprime-zero}
 v'(s)=0,
 \qquad s\in\left\{0,\frac12\right\}.
\end{equation}

Fix an energy $E$ satisfying
\begin{equation*}
 L=L(E)<2\beta(\alpha).
\end{equation*}

Choose  $\gamma$   so that 
\begin{equation}\label{eq:gamma-epsilon}
 \frac{L}{2}<\gamma<\beta(\alpha).
\end{equation}

By the definition of $\beta(\alpha)$, there are infinitely many positive
integers $q\to\infty$ and integers $p$ such that  
\begin{equation*}
 h=h_q=q\alpha-p\in[-1/2,1/2],
\end{equation*}
and
\begin{equation}\label{eq:h-small}
 |h|=\normT{q\alpha}\leq e^{-\gamma q}.
\end{equation}

\section{Proof of Case I}

Fix
\begin{equation*}
 \theta=s,
 \qquad s\in\left\{0,\frac12\right\}.
\end{equation*}
By \eqref{eq:symmetry-centers},
\[
 v(s-n\alpha)=v(s+n\alpha).
\]
Denote by $V(n)=v(s+n\alpha)$, so that $H_{v,\alpha,s}=\Delta+V$.
The potential $V=\{V(n)\}$ satisfies for all $n\in\Z$, 
\begin{equation}\label{gsym}
    V(n)=V(-n).
\end{equation}

Denote by
\begin{equation}\label{eq:site-blocks}
 T_0=A_q(s),\;\;
 T_+=A_q(s+h)=A_q(s+q\alpha),\;\;
 T_-=A_q(s-h)=A_q(s-q\alpha).
\end{equation}
By \eqref{gtransfer},
\begin{equation}\label{eq:site-block-action}
 \Phi(q)=T_0\Phi(0),
 \qquad
 \Phi(2q)=T_+\Phi(q),
 \qquad
 \Phi(0)=T_-\Phi(-q).
\end{equation}

Introduce
\begin{equation*}
 R(x)=
 \begin{pmatrix}
 1&0\\
 E-v(x)&-1
 \end{pmatrix}.
\end{equation*}
The eigenvalue equation \eqref{geig} gives
\begin{equation}\label{eq:R-action}
 R(\theta+n\alpha)\Phi(n)
 =\vct{\varphi(n)}{\varphi(n+1)}.
\end{equation}

\begin{lemma} 
\label{lem:site-reflection}
 We have that  
\begin{equation}\label{eq:site-reflection-identity}
 T_-R(s+h)T_0=R(s).
\end{equation}
\end{lemma}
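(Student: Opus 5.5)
The plan is to prove \eqref{eq:site-reflection-identity} by checking it on every vector $\Phi(0)\in\R^2$. I will follow one solution of \eqref{geig} through the three matrices, using the reflection symmetry \eqref{gsym} of the potential. Since both sides are $2\times 2$ matrices and every vector in $\R^2$ occurs as $\Phi(0)$ for some solution, equality on all such vectors gives the matrix identity.

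First I reinterpret each factor as a transfer between sites, with $\theta=s$ and $V(n)=v(s+n\alpha)$.
\begin{itemize}
\item Because $p\in\Z$, we have $s+h\equiv s+q\alpha$ and $s-h\equiv s-q\alpha$ on $\T$. So $T_0=A_q(\theta+0\cdot\alpha)$, $T_-=A_q(\theta+(-q)\alpha)$ and $R(s+h)=R(\theta+q\alpha)$.
\item By \eqref{gtransfer} with $m=0$, $T_0$ sends $\Phi(0)$ to $\Phi(q)$ for any solution $\varphi$. By \eqref{gtransfer} with $m=-q$, $T_-$ sends $\Psi(-q)$ to $\Psi(0)$ for any solution $\psi$.
\item By \eqref{eq:R-action} with $n=q$, $R(s+h)\Phi(q)=(\varphi(q),\varphi(q+1))^T$. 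By \eqref{eq:R-action} with $n=0$, $R(s)\Phi(0)=(\varphi(0),\varphi(1))^T$.
\end{itemize}

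The key step uses the symmetry. Let $\varphi$ be any solution of \eqref{geig} at $\theta=s$, and set $\psi(n)=\varphi(-n)$. Since $V(-n)=V(n)$ by \eqref{gsym},
\[
\psi(n+1)+\psi(n-1)+V(n)\psi(n)=\varphi(-n-1)+\varphi(-n+1)+V(-n)\varphi(-n)=E\varphi(-n)=E\psi(n),
\]
so $\psi$ is again a solution. Moreover,
\[
\Psi(-q)=(\psi(-q),\psi(-q-1))^T=(\varphi(q),\varphi(q+1))^T,
\qquad
\Psi(0)=(\psi(0),\psi(-1))^T=(\varphi(0),\varphi(1))^T.
\]

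Chaining these facts gives
\[
T_-R(s+h)T_0\Phi(0)=T_-R(s+h)\Phi(q)=T_-\Psi(-q)=\Psi(0)=R(s)\Phi(0).
\]
Since the initial vector $\Phi(0)=(\varphi(0),\varphi(-1))^T$ can be prescribed arbitrarily, the identity \eqref{eq:site-reflection-identity} follows.

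The only delicate point is bookkeeping: matching $s\pm h$ with $s\pm q\alpha$ modulo $1$, and getting the index orientation in \eqref{eq:R-action} right so that $R$ produces $\Psi$ of the reflected solution. As a cross-check, one could instead verify the identity algebraically. Write $S=\begin{pmatrix}0&1\\1&0\end{pmatrix}$; then $A(x)^{-1}=SA(x)S$, and reversing the product gives $A_q(s-q\alpha)^{-1}$ in terms of $A_q(s)$ via $v(s-x)=v(s+x)$. This route is messier, so the solution-based argument is the one I would write.
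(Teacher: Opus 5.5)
Your proof is correct and is essentially the paper's argument. You reflect an arbitrary solution via $\psi(n)=\varphi(-n)$, identify $\Psi(-q)=R(s+h)T_0\Phi(0)$ and $\Psi(0)=R(s)\Phi(0)$, use that $T_-$ transfers $\Psi(-q)$ to $\Psi(0)$, and conclude from the arbitrariness of $\Phi(0)$; along the way you also spell out the checks the paper leaves implicit, namely that $\psi$ solves the equation and that $s\pm h\equiv s\pm q\alpha$.
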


\begin{proof}
Let $\varphi$ be an arbitrary solution  of \eqref{geig} with $\theta=s$. 
  Define
\[
 \widetilde \varphi(n)=\varphi(-n),
\]
and
\[\widetilde \Phi (n)=\vct{\widetilde \varphi(n)}{\widetilde \varphi(n-1)}\]

The  symmetry of the potential \eqref{gsym} implies that $\widetilde \varphi$ solves the same
eigen-equation, in particular by \eqref{eq:site-block-action}, one has that 

\begin{equation}\label{g1}
 \widetilde\Phi(0)=T_- \widetilde \Phi(-q).
\end{equation}

By \eqref{eq:R-action} and \eqref{eq:site-block-action},
\begin{equation}\label{g2}
    \widetilde \Phi(-q)
 =\vct{\varphi(q)}{\varphi(q+1)}
 =R(s+h)\Phi(q)
 =R(s+h)T_0\Phi(0).
\end{equation}
 
By \eqref{eq:R-action}  again,
\begin{equation}\label{g3}
    \widetilde \Phi(0)
 =\vct{\varphi(0)}{\varphi(1)}
 =R(s)\Phi(0).
\end{equation}

Multiplying \eqref{g2} by $T_-$, and by \eqref{g1} and \eqref{g3}, we have that
\[
 T_-R(s+h)T_0\Phi(0)=R(s)\Phi(0).
\]
Since the initial vector $\Phi(0)$ can be  arbitrary, \eqref{eq:site-reflection-identity}
holds.
\end{proof}

\subsection{The matrix blocks}

Put
\begin{equation}\label{eq:r-site}
 r_s=\frac{E-v(s)}{2},
 \qquad
 S_s=
 \begin{pmatrix}
 1&0\\
 r_s&1
 \end{pmatrix},
 \qquad
 D=
 \begin{pmatrix}
 1&0\\
 0&-1
 \end{pmatrix}.
\end{equation}
The motivation for introducing the matrices in \eqref{eq:r-site} may
not yet be clear at this moment. We refer readers to the   remark in Section 4.3  at the of this paper for a
conceptual explanation of their construction.

Direct computations give
\begin{equation}\label{eq:conjugate-R0}
 S_s^{-1}R(s)S_s=D,
\end{equation}
and
\begin{equation}\label{eq:conjugate-Rh}
 S_s^{-1}R(s+h)S_s
 =D_h:=
 \begin{pmatrix}
 1&0\\
 \eta_h&-1
 \end{pmatrix},
\end{equation}
where
\begin{equation*}
 \eta_h=v(s)-v(s+h)
\end{equation*}
Notice that $D_h^2=I$.  By \eqref{eq:vprime-zero} and $v\in C^2$, one has that
\begin{equation}\label{eq:eta-quadratic}
 |\eta_h|\leq Ch^2.
\end{equation}

Define
\begin{equation}\label{eq:B-site-def}
 B_0=S_s^{-1}T_0S_s,
 \qquad
 B_\pm=S_s^{-1}T_\pm S_s.
\end{equation}
By  \eqref{eq:site-reflection-identity}, \eqref{eq:conjugate-R0} and \eqref{eq:conjugate-Rh}, one has that
\begin{equation}\label{eq:Bminus-site-abstract}
 B_-=DB_0^{-1}D_h.
\end{equation}
Write
\begin{equation}\label{eq:B0-entries}
 B_0=
 \begin{pmatrix}
 a&c\\
 b&d
 \end{pmatrix}.
\end{equation}
Clearly,
\begin{equation}\label{eq:adbc}
 ad-bc=1.
\end{equation}
Equation \eqref{eq:Bminus-site-abstract} now yields the exact formula
\begin{equation}\label{eq:Bminus-site}
 B_-=
 \begin{pmatrix}
 d-\eta_hc&c\\
 b-\eta_ha&a
 \end{pmatrix}.
\end{equation}

Define 
\begin{equation}\label{eq:Gamma-site}
 \Gamma_q=B_++B_--2B_0.
\end{equation}
By Lemma~\ref{lem:second-derivative},  \eqref{eq:gamma-epsilon}, \eqref{eq:h-small}, \eqref{eq:site-blocks}, \eqref{eq:B-site-def} and the fact that
$S_s$ is fixed, we have that
\begin{equation}\label{eq:Gamma-site-small}
 \|\Gamma_q\|\leq C h^2e^{(L+\varepsilon)q} \leq C  e^{(L-2\gamma+\varepsilon)q}\to 0, \text { as } q\to\infty.
\end{equation}
In the following, all limits denoted by
$\to$ are understood as $q\to\infty$.

Similarly, by Lemma \ref{lem:uniform-upper}, \eqref{eq:gamma-epsilon}, \eqref{eq:h-small},  and \eqref{eq:eta-quadratic}, one has that 
\begin{equation}\label{eq:eta-B-small}
 |\eta_h|\,\|B_0\|
 \leq  C h^2e^{(L+\varepsilon)q} \leq C  e^{(L-2\gamma+\varepsilon)q}\to 0
\end{equation}
Assume, for contradiction, that $u\in\ell^2(\Z)$ is a nonzero eigenfunction of $H_{v,\alpha,s} u=E u$.
By Lemma~\ref{lem:simplicity} and the  symmetry of the potential \eqref{gsym}, either for all $n\in\Z$,
\[
 u(-n)=u(n)
\]
or for all $n\in\Z$,
\[
 u(-n)=-u(n).
\]

Let
\[U (n)=\vct{u(n)}{u(n-1)},\]
and 
\begin{equation*}
 X_j=S_s^{-1}U(jq),
 \qquad j=0,1,2.
\end{equation*}
Since $u\in\ell^2(\Z)$ and $q\to\infty$, we have that
$||U(q)||\to 0$ and $||U(2q)||\to 0$, so 
\begin{equation}\label{eq:X12-small}
 ||X_1||\to 0,
 \qquad
 ||X_2||\to 0.
\end{equation}

By \eqref{eq:site-block-action} and \eqref{eq:B-site-def}, one has that
\begin{equation}\label{eq:X-site-dynamics}
 X_1=B_0X_0,
 \qquad
 X_2=B_+X_1.
\end{equation}

Multiplying  \eqref{eq:Gamma-site} by $X_1$, one has that 
\begin{equation}\label{eq:Bplus-from-sym}
 (B_+-2B_0+B_-)X_1=\Gamma_q X_1.
\end{equation}

    By \eqref{eq:Gamma-site-small} and \eqref{eq:X12-small}, one has that 
\begin{equation*}
    ||\Gamma_q X_1||\to 0.
\end{equation*}
    By \eqref{eq:X12-small} and \eqref{eq:X-site-dynamics}, we have that 
\begin{equation}\label{gj183}
    ||B_+X_1||=||X_2||\to 0
\end{equation}
So, by \eqref{eq:Bplus-from-sym} and \eqref{gj183}, we conclude that 
\begin{equation} \label{gj186}
   || (2B_0-B_-)X_1||\to 0.
\end{equation}

\subsection{Contradiction for  the even eigenfunction}

Assume that $u$ is even. The equation \eqref{geig} at $n=0$ gives
\[
 2u(1)+(v(s)-E)u(0)=0,
\]
so
\[
 u(-1)=u(1)=r_su(0).
\]
Since $u$ is nontrivial, we have $u(0)\neq 0$.
Without loss of generality, assume that $u(-1)=r_s$ and $u(0)=1$.
  Therefore,
\begin{equation}\label{eq:X0-site-even}
 X_0=S_s^{-1}U(0)=\eone.
\end{equation}

By \eqref{eq:X0-site-even}, \eqref{eq:X-site-dynamics} and  \eqref{eq:B0-entries}, we have
\begin{equation}\label{eq:X1-even}
 X_1=B_0\eone=\vct{a}{b},
\end{equation}
so by \eqref{eq:X12-small},
\begin{equation}\label{gj181}
   a\to 0 \text{ and } b\to 0.
\end{equation}

Using  \eqref{eq:B0-entries}, \eqref{eq:adbc}, \eqref{eq:Bminus-site}, and
\eqref{eq:X1-even}, a direct multiplication gives
\begin{equation*}
  (2B_0-B_-)X_1=(2B_0-B_-)\vct{a}{b}
 =\vct{2a^2-1+\eta_hac}{2bd+\eta_ha^2}.
\end{equation*}

By \eqref{eq:eta-B-small} and \eqref{gj181}, one has that
\begin{equation}\label{gj184}
    |\eta_hac|
 \leq |a|\,|\eta_h|\,\|B_0\|
 \to 0.
\end{equation}

By the fact $a\to0$ and \eqref{gj184}, we have that the first coordinate of $(2B_0-B_-)X_1$,
\begin{equation*}
    2a^2-1+\eta_hac \to -1.
\end{equation*}
This  contradicts  \eqref{gj186}.

\subsection{Contradiction for the odd eigenfunction}

Assume now that $u$ is odd.  Then $u(0)=0$.  Since $u$ is nonzero,
$u(-1)\neq0$. Without loss of generality, assume $u(-1)=1$. In this case,
\begin{equation}\label{eq:X0-site-odd}
 X_0=S_s^{-1}U(0)=\etwo,
\end{equation}
and
\begin{equation*}
 X_1=B_0\etwo=\vct{c}{d},
\end{equation*}
so $c\to0$ and $d\to0$.  Again using
 \eqref{eq:B0-entries}, \eqref{eq:adbc}, and \eqref{eq:Bminus-site},
\begin{equation*}
  (2B_0-B_-)X_1=(2B_0-B_-)\vct{c}{d}
 =\vct{2ac+\eta_hc^2}{2d^2-1+\eta_hac}.
\end{equation*}

Here $d\to0$, while
\[
 |\eta_hac|
 \leq |c|\,|\eta_h|\,\|B_0\|
 \to 0.
\]
Consequently, the second coordinate of $ (2B_0-B_-)X_1$,
\begin{equation*}
 2d^2-1+\eta_hac\to -1.
\end{equation*}

This also contradicts  \eqref{gj186}.

\section{Proof of Case II}

Fix
\begin{equation*}
 \theta=x_*=s+\frac{\alpha}{2},
 \qquad
 s\in\left\{0,\frac12\right\}.
\end{equation*}
Then the  potential $V(n)=v(x_*+n\alpha)$ satisfies the following  symmetry:
\begin{align}
 v\bigl(x_*+(-n-1)\alpha\bigr)
 &=v\left(s-\left(n+\frac12\right)\alpha\right)\notag\\
 &=v\left(s+\left(n+\frac12\right)\alpha\right)
 =v(x_*+n\alpha).
\label{eq:bond-symmetry}
\end{align}
This implies for all $n$,
\begin{equation}\label{gj191}
   V(-n-1)=V(n).
\end{equation}
For a resonant scale $q$, define
\begin{equation}\label{eq:bond-blocks}
 T_0=A_q(x_*),
 \;\;
 T_+=A_q(x_*+h)=A_q(x_*+q\alpha),\;\;
 T_-=A_q(x_*-h)=A_q(x_*-q\alpha).
\end{equation}
Let
\begin{equation*}
 P=\begin{pmatrix}0&1\\1&0\end{pmatrix}.
\end{equation*}

\begin{lemma}
\label{lem:bond-reflection}
The matrices in \eqref{eq:bond-blocks} satisfy
\begin{equation}\label{eq:bond-reflection-identity}
 T_-PT_0=P.
\end{equation}
\end{lemma}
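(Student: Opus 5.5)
The plan is to repeat the argument of Lemma~\ref{lem:site-reflection}, with the site reflection $n\mapsto -n$ replaced by the bond reflection $n\mapsto -n-1$ coming from \eqref{gj191}. In this case no matrix $R$ is needed: the reflection simply swaps the two entries of $\Phi$, and this is exactly what $P$ does.

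\textbf{Step 1: the reflected function is again a solution.} Let $\varphi$ be an arbitrary solution of \eqref{geig} with $\theta=x_*$, and set
\[
\widetilde\varphi(n)=\varphi(-n-1),\qquad \widetilde\Phi(n)=\vct{\widetilde\varphi(n)}{\widetilde\varphi(n-1)}.
\]
Using \eqref{gj191} we get
\[
\widetilde\varphi(n+1)+\widetilde\varphi(n-1)+V(n)\widetilde\varphi(n)
=\varphi(-n-2)+\varphi(-n)+V(-n-1)\varphi(-n-1)=E\varphi(-n-1)=E\widetilde\varphi(n).
\]
The middle equality is the eigen-equation for $\varphi$ at the site $-n-1$. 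Hence $\widetilde\varphi$ solves the same equation.

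\textbf{Step 2: identify the reflected vectors.} Directly from the definitions,
\[
\widetilde\Phi(n)=\vct{\varphi(-n-1)}{\varphi(-n)}=P\,\Phi(-n).
\]
In particular, using \eqref{gtransfer} for the second identity,
\[
\widetilde\Phi(0)=P\Phi(0),\qquad \widetilde\Phi(-q)=P\Phi(q)=PT_0\Phi(0).
\]

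\textbf{Step 3: apply the block relation to $\widetilde\varphi$.} Since $\widetilde\varphi$ is a solution with phase $x_*$, \eqref{gtransfer} with $m=-q$, $n=q$ gives
\[
\widetilde\Phi(0)=A_q(x_*-q\alpha)\widetilde\Phi(-q).
\]
Because $h=q\alpha-p$ with $p\in\Z$ and $A_q$ is $1$-periodic, $A_q(x_*-q\alpha)=A_q(x_*-h)=T_-$, consistent with \eqref{eq:bond-blocks}.

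\textbf{Conclusion.} Combining Steps 2 and 3 yields
\[
T_-PT_0\Phi(0)=P\Phi(0).
\]
The initial vector $\Phi(0)$ can be chosen arbitrarily in $\R^2$, so \eqref{eq:bond-reflection-identity} follows.

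There is no real obstacle here. The only points requiring care are the index bookkeeping in Step 2, namely checking that $\widetilde\Phi(n)=P\Phi(-n)$ with exactly this shift, and the periodicity identification of $T_-$ in Step 3.
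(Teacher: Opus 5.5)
Your proposal is correct and follows the paper's proof essentially step for step. You reflect via $n\mapsto -n-1$, identify $\widetilde\Phi(0)=P\Phi(0)$ and $\widetilde\Phi(-q)=P\Phi(q)=PT_0\Phi(0)$, and apply the transfer relation for $\widetilde\varphi$ from $-q$ to $0$, which equals $T_-$ by $1$-periodicity; your general identity $\widetilde\Phi(n)=P\Phi(-n)$ and your explicit check that $\widetilde\varphi$ solves the equation are tidier write-ups of steps the paper states only briefly.
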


\begin{proof}
Let $\varphi$ be an arbitrary solution of \eqref{geig} with $\theta=x_*$.  Define
\[
 \widetilde \varphi(n)=\varphi(-n-1),
\]
and
\[
 \widetilde \Phi(n)
=\vct{\widetilde\varphi(n)}{\widetilde\varphi(n-1)}.
\]
By \eqref{eq:bond-symmetry} (or \eqref{gj191}), $\widetilde \varphi$ solves the same equation \eqref{geig}.  Since
$\Phi(q)=T_0\Phi(0)$,
\[
 \widetilde \Phi(-q)
 =\vct{\varphi(q-1)}{\varphi(q)}
 =P\Phi(q)
 =PT_0\Phi(0).
\]
Moreover,
\[
 \widetilde \Phi(0)
 =\vct{\varphi(-1)}{\varphi(0)}
 =P\Phi(0).
\]
Since the transfer matrix from $-q$ to $0$ begins at phase
$x_*-q\alpha=x_*-h\pmod1$ equals $T_-$, we have that 
$ \widetilde \Phi(0)=T_-\widetilde \Phi(-q) $ and hence
\[
 T_-PT_0\Phi(0)=P\Phi(0).
\]
This implies \eqref{eq:bond-reflection-identity}.
\end{proof}

\subsection{The matrix blocks}

Let
\begin{equation*}
 S=\begin{pmatrix}1&1\\1&-1\end{pmatrix},
 \qquad
 D=\begin{pmatrix}1&0\\0&-1\end{pmatrix}.
\end{equation*}
Then
\begin{equation}\label{eq:P-diagonalized}
 S^{-1}PS=D.
\end{equation}
Define
\begin{equation*}
 B_0=S^{-1}T_0S,
 \qquad
 B_\pm=S^{-1}T_\pm S.
\end{equation*}
By \eqref{eq:bond-reflection-identity} and \eqref{eq:P-diagonalized}, we have that
\begin{equation*}
 B_-=DB_0^{-1}D.
\end{equation*}
Write, as before,
\begin{equation}\label{gj192}
    B_0=\begin{pmatrix}a&c\\b&d\end{pmatrix},
 \qquad ad-bc=1. 
\end{equation}

Then
\begin{equation}\label{eq:Bminus-bond}
 B_-=\begin{pmatrix}d&c\\b&a\end{pmatrix}.
\end{equation}

\subsection{ Contradictions}

Suppose $u\in\ell^2(\Z)$ is a nonzero eigenfunction.  By
Lemma~\ref{lem:simplicity} and  reflection symmetry \eqref{gj191}, there exists
$\sigma\in\{1,-1\}$ such that
\begin{equation*}
 u(-n-1)=\sigma u(n).
\end{equation*}
In particular,
\[
 U(0)=u(0)\vct{1}{\sigma}.
\]
Clearly, $u(0)\neq 0$. Without loss of generality, assume $u(0)=1$.

Set
\begin{equation*}
 X_j=S^{-1}U(jq),
 \qquad j=0,1,2.
\end{equation*}
As in the  Case I, 
\begin{equation*}
 X_1=B_0X_0,
 \qquad
 X_2=B_+X_1,
\end{equation*}
and  
\begin{equation*}
 ||X_1||\to 0,
 \qquad
 ||X_2||\to0.
\end{equation*}
Also
\begin{equation} \label{gj186new}
   || (2B_0-B_-)X_1||\to 0.
\end{equation}
Computations show
\begin{equation}\label{eq:X0-bond}
 X_0=\eone\quad\text{if }\sigma=1,
 \qquad
 X_0=\etwo\quad\text{if }\sigma=-1.
\end{equation}

If $\sigma=1$, then $X_1=(a,b)^T$, so $a,b\to0$.  Using
\eqref{gj192} and \eqref{eq:Bminus-bond}, one obtains
\begin{equation*}
 (2B_0-B_-)X_1
 =\vct{2a^2-1}{2bd}.
\end{equation*}
Thus the first coordinate of $(2B_0-B_-)X_1 $  goes to  $-1$, contradicting \eqref{gj186new}.

If $\sigma=-1$, then $X_1=(c,d)^T$, so $c,d\to0$.  The same computation
gives
\begin{equation*}
 (2B_0-B_-)X_1
 =\vct{2ac}{2d^2-1}.
\end{equation*}
Thus the second coordinate of $(2B_0-B_-)X_1$ goes to $-1$, again contradicting
\eqref{gj186new}.

\subsection{ Remark}
In the proof, instead of estimating $T_-+T_+-2T_0$,
we conjugate the matrices $T_0$ and $T_{\pm}$ to $B_0$ and $B_{\pm}$:
$
B_0=S^{-1}T_0S,
B_{\pm}=S^{-1}T_{\pm}S,
$
and estimate $B_-+B_+-2B_0$,
The matrix $S$ is chosen so that the initial vectors associated with the $\ell^2(\Z)$ solution are mapped to the standard
basis vectors $e_1$ and $e_2$; see
\eqref{eq:X0-site-even}, \eqref{eq:X0-site-odd}, and
\eqref{eq:X0-bond}. 

We could also conduct the argument directly in terms of
$T_0$ and $T_{\pm}$. Such a formulation is somewhat more coordinate-free and may be conceptually natural. However, the
resulting matrix calculations are less transparent and more difficult for the readers
to verify. 
We therefore work with $B_0$ and $B_{\pm}$ so that  computations become explicit.

\section*{Acknowledgments}
This work was supported in part by NSF DMS-2246031.

\section*{Statements and Declarations}
{\bf Conflict of Interest} 
The author declares no conflicts of interest.

\vspace{0.2in}
{\bf Data Availability}
Data sharing is not applicable to this article as no new data were created or analyzed in this study.

\bibliographystyle{alpha} 
\bibliography{main}

@article {aj09,
    AUTHOR = {Avila, Artur and Jitomirskaya, Svetlana},
     TITLE = {The {T}en {M}artini {P}roblem},
   JOURNAL = {Ann. of Math. (2)},
  FJOURNAL = {Annals of Mathematics. Second Series},
    VOLUME = {170},
      YEAR = {2009},
    NUMBER = {1},
     PAGES = {303--342},
      ISSN = {0003-486X,1939-8980},
   MRCLASS = {47B80 (37A99 39A70 47A10 47B36 81Q10)},
  MRNUMBER = {2521117},
MRREVIEWER = {David\ Damanik},
       DOI = {10.4007/annals.2009.170.303},
       URL = {https://doi.org/10.4007/annals.2009.170.303},
}

@article {puig,
    AUTHOR = {Puig, Joaquim},
     TITLE = {Cantor spectrum for the almost {M}athieu operator},
   JOURNAL = {Comm. Math. Phys.},
  FJOURNAL = {Communications in Mathematical Physics},
    VOLUME = {244},
      YEAR = {2004},
    NUMBER = {2},
     PAGES = {297--309},
      ISSN = {0010-3616,1432-0916},
   MRCLASS = {11K60 (39A70)},
  MRNUMBER = {2031032},
MRREVIEWER = {David\ Damanik},
       DOI = {10.1007/s00220-003-0977-3},
       URL = {https://doi.org/10.1007/s00220-003-0977-3},
}

@incollection {jicm,
    AUTHOR = {Jitomirskaya, Svetlana},
     TITLE = {One-dimensional quasiperiodic operators: global theory,
              duality, and sharp analysis of small denominators},
 BOOKTITLE = {I{CM}---{I}nternational {C}ongress of {M}athematicians. {V}ol.
              2. {P}lenary lectures},
     PAGES = {1090--1120},
 PUBLISHER = {EMS Press, Berlin},
      YEAR = {[2023] \copyright 2023},
      ISBN = {978-3-98547-060-0; 978-3-98547-560-5; 978-3-98547-058-7},
   MRCLASS = {47B36 (37C55 37H15 47-02 82B26)},
  MRNUMBER = {4680277},
       DOI = {10.4171/ICM2022/175},
       URL = {https://doi.org/10.4171/ICM2022/175},
}

@article{gor,
  title={The point spectrum of the one-dimensional {S}chr\"odinger operator},
  author={Gordon, A Ya},
  journal={Uspekhi Matematicheskikh Nauk},
  volume={31},
  number={4},
  pages={257--258},
  year={1976},
  publisher={Russian Academy of Sciences, Steklov Mathematical Institute of Russian~…}
}

@article{sim1,
  title={Almost periodic {S}chr{\"o}dinger operators: a review},
  author={Simon, Barry},
  journal={Advances in Applied Mathematics},
  volume={3},
  number={4},
  pages={463--490},
  year={1982},
  publisher={Elsevier}
}

@article {as82,
    AUTHOR = {Avron, Joseph and Simon, Barry},
     TITLE = {Singular continuous spectrum for a class of almost periodic
              {J}acobi matrices},
   JOURNAL = {Bull. Amer. Math. Soc. (N.S.)},
  FJOURNAL = {American Mathematical Society. Bulletin. New Series},
    VOLUME = {6},
      YEAR = {1982},
    NUMBER = {1},
     PAGES = {81--85},
      ISSN = {0273-0979},
   MRCLASS = {47B37 (81Cxx)},
  MRNUMBER = {634437},
       DOI = {10.1090/S0273-0979-1982-14971-0},
       URL = {https://doi.org/10.1090/S0273-0979-1982-14971-0},
}

@article {js,
    AUTHOR = {Jitomirskaya, S. and Simon, B.},
     TITLE = {Operators with singular continuous spectrum. {III}. {A}lmost
              periodic {S}chr\"{o}dinger operators},
   JOURNAL = {Comm. Math. Phys.},
  FJOURNAL = {Communications in Mathematical Physics},
    VOLUME = {165},
      YEAR = {1994},
    NUMBER = {1},
     PAGES = {201--205},
      ISSN = {0010-3616},
   MRCLASS = {47-02 (34L05 47E05 81Q10)},
  MRNUMBER = {1298948},
MRREVIEWER = {Horst Behncke},
       URL = {http://projecteuclid.org/euclid.cmp/1104271040},
}

@article {jl,
    AUTHOR = {Jitomirskaya, Svetlana and Liu, Wencai},
     TITLE = {Arithmetic spectral transitions for the {M}aryland model},
   JOURNAL = {Comm. Pure Appl. Math.},
  FJOURNAL = {Communications on Pure and Applied Mathematics},
    VOLUME = {70},
      YEAR = {2017},
    NUMBER = {6},
     PAGES = {1025--1051},
      ISSN = {0010-3640},
   MRCLASS = {47A10 (37A45 39A60 82B10)},
  MRNUMBER = {3639318},
MRREVIEWER = {Sophie Grivaux},
       DOI = {10.1002/cpa.21688},
       URL = {https://doi.org/10.1002/cpa.21688},
}

@article {jl1,
    AUTHOR = {Jitomirskaya, Svetlana and Liu, Wencai},
     TITLE = {Universal hierarchical structure of quasiperiodic
              eigenfunctions},
   JOURNAL = {Ann. of Math. (2)},
  FJOURNAL = {Annals of Mathematics. Second Series},
    VOLUME = {187},
      YEAR = {2018},
    NUMBER = {3},
     PAGES = {721--776},
      ISSN = {0003-486X},
   MRCLASS = {47B36 (37C55 82D40)},
  MRNUMBER = {3779957},
MRREVIEWER = {T\'{u}lio O. Carvalho},
       DOI = {10.4007/annals.2018.187.3.3},
       URL = {https://doi.org/10.4007/annals.2018.187.3.3},
}

@article{jl2,
   AUTHOR = {Jitomirskaya, Svetlana and Liu, Wencai},
     TITLE = {Universal reflective-hierarchical structure of quasiperiodic
              eigenfunctions and sharp spectral transition in phase},
   JOURNAL = {J. Eur. Math. Soc. (JEMS)},
  FJOURNAL = {Journal of the European Mathematical Society (JEMS)},
    VOLUME = {26},
      YEAR = {2024},
    NUMBER = {8},
     PAGES = {2797--2836},
      ISSN = {1435-9855},
   MRCLASS = {47B36 (37C55 81Q10)},
  MRNUMBER = {4756946},
       DOI = {10.4171/jems/1325},
       URL = {https://doi.org/10.4171/jems/1325},
}

@article {ayz,
    AUTHOR = {Avila, Artur and You, Jiangong and Zhou, Qi},
     TITLE = {Sharp phase transitions for the almost {M}athieu operator},
   JOURNAL = {Duke Math. J.},
  FJOURNAL = {Duke Mathematical Journal},
    VOLUME = {166},
      YEAR = {2017},
    NUMBER = {14},
     PAGES = {2697--2718},
      ISSN = {0012-7094},
   MRCLASS = {47B36 (37C55 39A70 47B39 81Q10 82B26)},
  MRNUMBER = {3707287},
MRREVIEWER = {Peicheng Zhu},
       DOI = {10.1215/00127094-2017-0013},
       URL = {https://doi.org/10.1215/00127094-2017-0013},
}

@article{liupafa,
  title={A New Proof of the Sharp {G}ordon's Lemma: No Eigenvalues for {S}chr{\" o}dinger Operators with Almost Repetition Potentials},
  author={Liu, Wencai},
  journal={Pure Appl. Funct. Anal. to appear}}

@article {liupmj,
    AUTHOR = {Liu, Wencai},
     TITLE = {Small denominators and large numerators of quasiperiodic
              {S}chr\"odinger operators},
   JOURNAL = {Peking Math. J.},
  FJOURNAL = {Peking Mathematical Journal},
    VOLUME = {8},
      YEAR = {2025},
    NUMBER = {3},
     PAGES = {503--532},
      ISSN = {2096-6075,2524-7182},
   MRCLASS = {47A10 (47B39 81Q10)},
  MRNUMBER = {4950602},
MRREVIEWER = {Atsuhide\ Ishida},
       DOI = {10.1007/s42543-023-00075-3},
       URL = {https://doi.org/10.1007/s42543-023-00075-3},
}

\end{document}